\newtheorem{lemma}{Lemma}
\newcommand{\ignore}[1]{}
\begin{document}
%
% paper title
% can use linebreaks \\ within to get better formatting as desired
%\title{Contaminant Removal for Malware Detection on Android}
\title{Contaminant Removal for Android Malware Detection Systems}
% author names and affiliations
% use a multiple column layout for up to two different
% affiliations
\author{\IEEEauthorblockN{Lichao Sun\IEEEauthorrefmark{1},
Xiaokai Wei\IEEEauthorrefmark{2},
Jiawei Zhang\IEEEauthorrefmark{3},
Lifang He\IEEEauthorrefmark{4},
Philip S. Yu\IEEEauthorrefmark{1} and
Witawas Srisa-an\IEEEauthorrefmark{5}}
\IEEEauthorblockA{\IEEEauthorrefmark{1}University of Illinois at Chicago, Chicago, IL \IEEEauthorrefmark{2}Facebook, Menlo Park, CA}
\IEEEauthorblockA{\IEEEauthorrefmark{3}IFM Lab, Florida State University, FL \IEEEauthorrefmark{4}Cornell University, New York City, NY}
\IEEEauthorblockA{\IEEEauthorrefmark{5}University of Nebraska - Lincoln, Lincoln, NE}
\IEEEauthorblockA{
Email: \{lsun29, xwei2, psyu\}@uic.edu, \{jwzhanggy, lifanghescut\}@gmail.com, witty@cse.unl.edu}
% \IEEEauthorblockA{\IEEEauthorrefmark{3}\\
% Telephone: (800) 555--1212, Fax: (888) 555--1212}
}

\maketitle

\begin{abstract}

A recent report indicates that there is a new malicious
app introduced every 4 seconds. This rapid malware
distribution rate causes existing malware detection
systems to fall far behind, allowing malicious apps to
escape vetting efforts and be distributed by even
legitimate app stores.  When trusted downloading sites
distribute malware, several negative consequences
ensue. First, the popularity of these sites would allow
such malicious apps to quickly and widely infect
devices. Second, analysts and researchers who rely on
machine learning based detection techniques may also
download these apps and mistakenly label them as benign
since they have not been
disclosed as malware. These apps are then used as part of their
benign dataset during model training and testing. The presence of 
contaminants in benign dataset can compromise
the effectiveness and accuracy of their detection and classification
techniques. 

To address this issue, we introduce {\sc PUDroid}
(Positive and Unlabeled learning-based malware
detection for Android) to automatically and effectively
remove contaminants from training datasets, allowing
machine learning based malware classifiers and
detectors to be more effective and accurate. 
To further improve the performance of such detectors, 
we apply a feature selection strategy to select pertinent features from a variety of features. We then compare the detection
rates and accuracy of detection systems using two
datasets; one using \textsc{PUDroid} to remove
contaminants and the other without removing
contaminants. 
The results indicate that once we remove contaminants from the datasets, we can significantly improve both malware detection rate and detection accuracy.
\end{abstract}

%The results indicate cleaner training
%datasets can improve detection rate by as much as
%62.82\% and accuracy by as much as 45\%. 

%
% The code below should be generated by the tool at
% http://dl.acm.org/ccs.cfm
% Please copy and paste the code instead of the example below. 
%

%
% End generated code
%

%
%  Use this command to print the description
%
%\printccsdesc

% We no longer use \terms command
%\terms{Theory}

\begin{IEEEkeywords}
Mobile Security; Malware Detection; Noise Detection; Android Malware; PU Learning; 
\end{IEEEkeywords}

\IEEEpeerreviewmaketitle

\section{Introduction} \label{introduction}

Android is currently the most used smart-mobile device
platform in the world, occupying 87.6\% of market
share and over 1.4 billion Android
devices in deployment~\cite{idc2016}.
Unfortunately, the popularity of Android also makes it
a popular target for cyber-criminals to create
malicious apps that can steal sensitive information and
compromise systems~\cite{forbes2014}. During the first
three months of 2016, Kaspersky Lab uncovered over 2
million malware samples including trojans, worms,
exploits, and viruses.  On average, a malicious app is
introduced in every 3.79 seconds~\cite{kaspersky2016}.
Some types of malicious apps have more than 50
variants, making detecting all of them very
challenging~\cite{symantec2016}.  
%As such, some of
%these malicious apps evaded
%vetting efforts by trusted distribution
%sites such as Play Store and entered the
%distribution channels~\cite{Acharya16}. 

There have been several approaches to detect these
malicious Android apps. Most approaches focus on the
attack behaviors, and use static or dynamic analysis to
\begin{figure}[tbh!]
\centering
\includegraphics[width=0.45\textwidth]{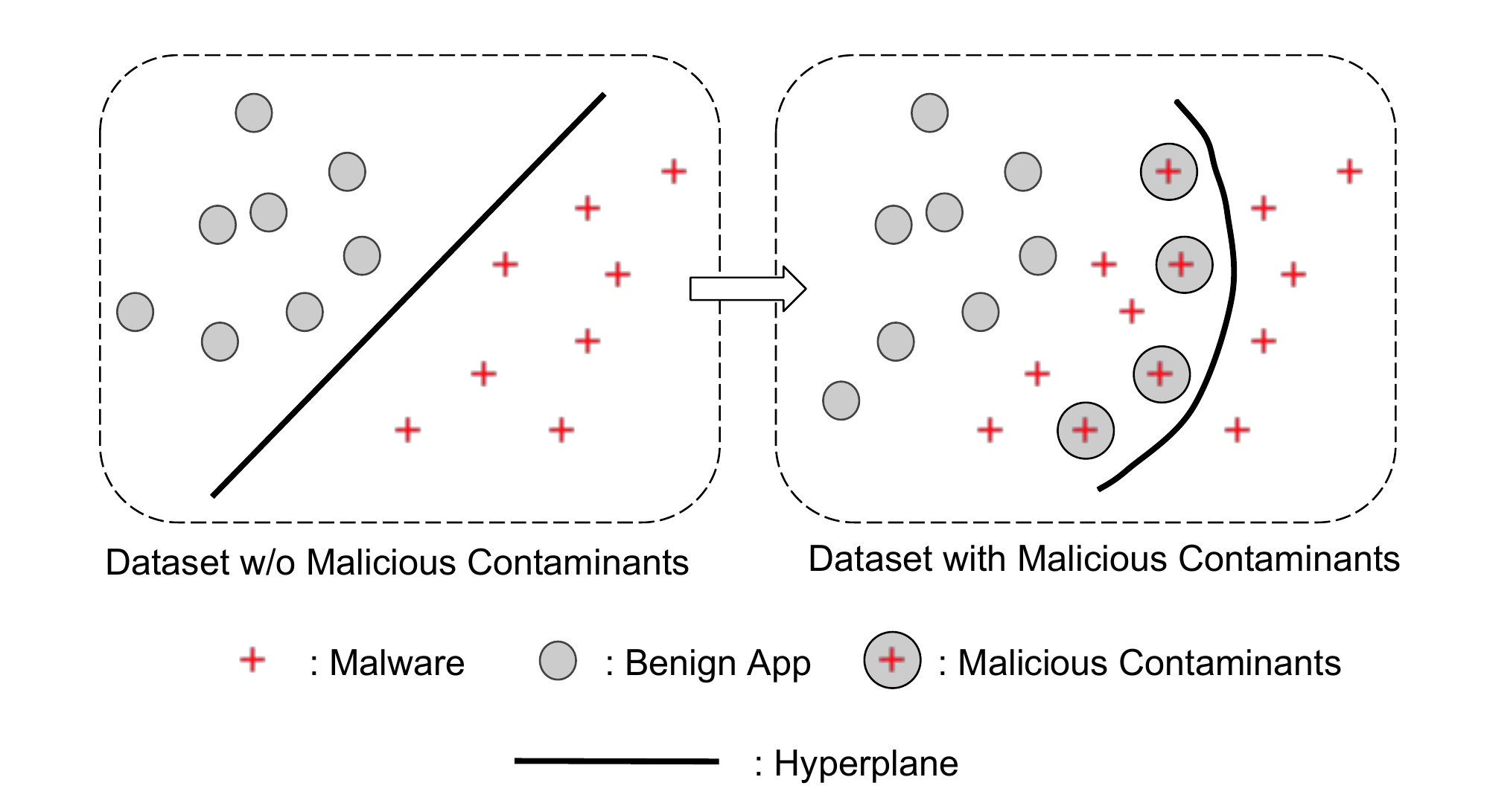}
\vspace{-15pt}
\caption{Left figure shows machine learning can
classify malware and benign apps well without malicious
contaminants. Right figure shows that the machine
learning cannot work well for malware detection with
malicious contaminants}
\label{fig:zero-day}
\vspace{-15pt}
\end{figure}
build detection tools that rely on approaches known to
work well for desktop
environments~\cite{grace2012riskranker}. 
However, static analysis approaches in general can
produce a large number of false positives while
dynamic analysis approaches need adequate input suites
to sufficiently exercise execution paths.  
Therefore, neither of them will work well 
for Andriod malicious app detection. 
Another emerging approach is to build 
detection techniques based on data mining and machine
learning techniques~\cite{arp2014drebin,sun2016sigpid,li2016droidclassifier}. %petsas2014rage 

For example, {\sc Drebin}~\cite{arp2014drebin} utilizes
multi-view features by combining static analysis and
supervised learning to accurately detect malware. {\sc
SIGPID}~\cite{sun2016sigpid} improves upon {\sc
Drebin}~\cite{arp2014drebin} by using many more
features for training and detection.  {\sc
DroidClassifier}~\cite{li2016droidclassifier} uses
traffic flow information and unsupervised learning to
detect the malware and classify the family of each
malicious app. 
% \textsc{AppContext}~\cite{YangICSE2015} uses program
% structures and as the information input for machine learning model building to help determine malicious apps services.
 
\begin{figure*}[tbh!]
\centering
\includegraphics[width=0.85\textwidth]{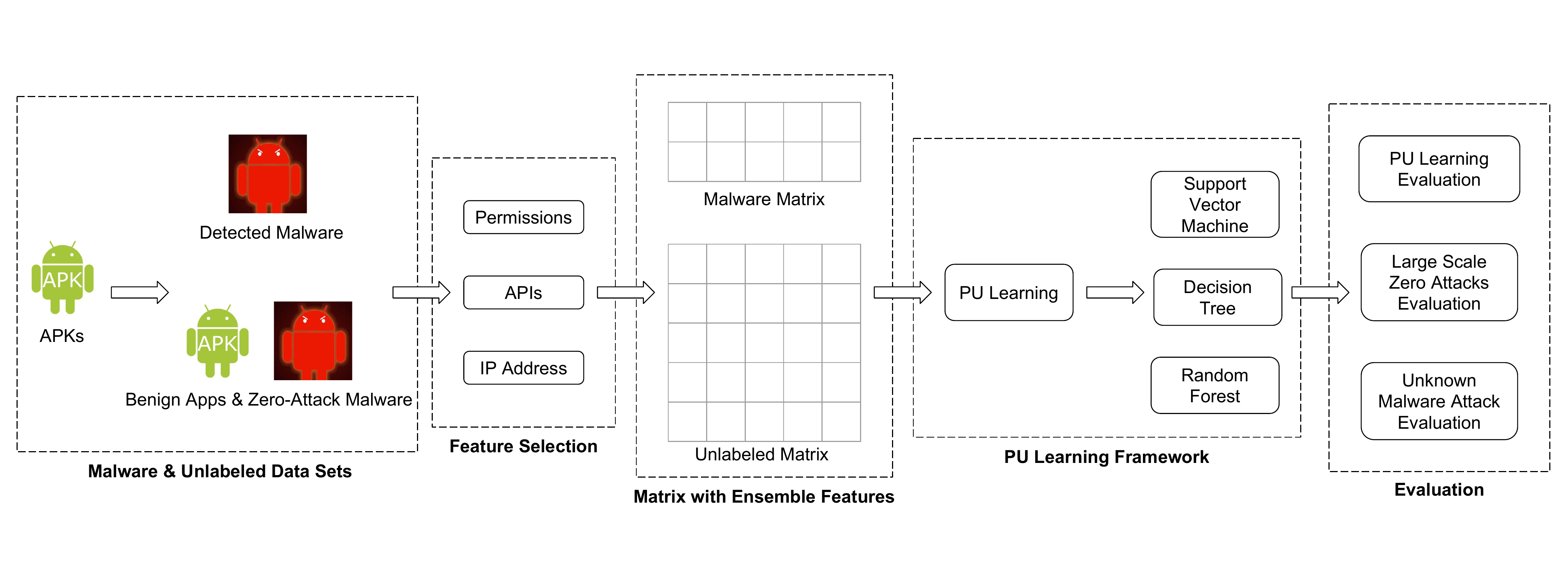}
\vspace{-18pt}
\caption{An Overview of PUDroid Approach}
\label{fig:PUDroid}
\vspace{-10pt}
\end{figure*}

When machine learning techniques are used to help with
malware detection, the detection effectiveness and
accuracy are highly dependent on the quality of the
training datasets. To create such dataset,
researchers typically label a set of malicious apps
and a set of benign apps. To build the malicious dataset, researchers manually label these malicious apps
one by one based on known information from various
malware analysis and collection sources (e.g., virusshare.com). To build the benign dataset,
researchers download apps from trusted
distribution sources such as Play Store and verify that those apps have not been recently disclosed as malware. However, as
previously mentioned, malicious apps are created every few seconds so malware detection sites often fall behind in disclosing new malware. As such, trusted distribution sites have
been known to distributed repackaged malware~\cite{Acharya16}.

Currently, repackaging benign apps with existing malicious components is the leading approach employed by cybercriminals to create
Android malware~\cite{idc2016}. Thus, it is quite common for undisclosed malware to have previously known basic attack behaviors. However, the sheer volume of recently created malware also makes it possible for benign dataset to contain
undisclosed malware as illustrated in the right side of
Fig.~\ref{fig:zero-day}. The presence of these malicious apps in a benign dataset (we refer to these
malicious apps as contaminants) can negatively affect
the accuracy and effectiveness of machine
learning-based detection tools. As such, we need a
mechanism to effectively remove such contaminants from
benign dataset. 

% Fig.~\ref{fig:PUDroid} illustrates the
% functionality of {\sc PUDroid}, which contains five
% steps. 

In this paper, we present {\sc PUDroid}, an approach that leverages the \emph{Positive and Unlabeled} (\emph{PU}) learning to remove
contaminants as part of building benign dataset for robust malware detection in Android. Specifically, we assume that the positive group contains malicious apps, and the unlabeled group contains benign apps and some contaminants consisting of unlabeled malicious apps. We make use of a feature selection strategy to build the desired input for {\sc PU} learning. In a nutshell, {\sc PUDroid} works as an additional validator to ensure that presumably benign apps downloaded from trusted distribution channels are indeed benign. 
That is, they do not contain any undisclosed but detectable malicious components due to repackaging. We then conduct empirical evaluation of \textsc{PUDroid} and find that it allows to remove nearly 100\% of all contaminants as part of building benign dataset. 

An overview of our {\sc PUDroid} approach is illustrated in Fig.~\ref{fig:PUDroid}. The main contributions are summarized as follows:
% including five steps. 
% The first step is collecting malicious 
% and benign apps and separating them into positive group and
% unlabeled group. In our setting, we assume that the
% positive group contains malicious apps, and the
% unlabeled group contain benign apps and some
% contaminants consisting of unlabeled malicious apps.
% The second step is analyzing the dataset to perform
% feature selection. The features are then used by the
% third step to compute matrices and create embedding
% features. The results produced by the third step are
% then used by the PU learning framework to perform
% learning. The results can then be used to conduct various evaluations and studies including detecting malicious contaminants and malware with known behaviors.

\vspace{5pt} \noindent
(1) To the best of our knowledge, {\sc PUDroid} is the first
attempt to use PU learning to remove contaminants as
part of building benign dataset for robust malware detection in Android.
    
\vspace{5pt} \noindent (2) We introduce a feature
selection strategy to analyze and select
``explainable'' features that are useful for effective
malware detection. Our strategy reduces the number of
embedding features by 93\% when we compare to those
used by existing approaches, while yielding nearly the
same level of effectiveness in malware detection.

\vspace{5pt} \noindent
(3) We evaluate {\sc PUDroid} using a
large dataset containing 5,560 malware samples with
2,200 features. We also investigate a scenario in which
{\sc PUDroid} needs to detect different magnitudes and types of
contaminants. The results show that {\sc PUDroid} is very
effective in such situations. 
    
\vspace{5pt} \noindent
(4) We also evaluate {\sc PUDroid} with different
classification methods to fine tune the accuracy of
{\sc PUDroid}. The results indicate that the
performance of a classification model is sensitive to
the number of malware samples present in a dataset.

% The rest of this paper is organized as follows.
% Section~\ref{background} provides the necessary
% background information.  Section~\ref{method} provides
% the implementation details of the proposed {\sc
% PUDroid}.  Section~\ref{evaluation} reports the results
% of our empirical evaluations.
% Section~\ref{relatedwork} compares our work to other
% related work. The last section concludes this paper.

\section{Background And Motivation} \label{background}

In this section we briefly introduce common machine
learning approaches employed by researchers to build
malware detection frameworks for Android. These
techniques employ many features ranging from static
information such as permissions to dynamic information
such as network traffics. 

\subsection{Permission-Based Learning}

Android security system offers permission control
mechanisms as one of most important components. Android
app developers need to declare the permissions for each
app to have access to resources such as text messaging
and address book. The declaration can be found in the
manifest file. When users try to install apps, they can
choose to approve or decline the requested permissions.

In the case of installing apps from third party stores,
users may need to root their devices, making their
systems more vulnerable. Furthermore, we have also seen
that developers tend to request more permissions than
the apps actually need. These behaviors make Android
security mechanism vulnerable to malicious
attacks. To help a user
determine if requested permissions can make a system
vulnerable to attacks, Android provides "protection
Level" to help characterize the potential risks of
these permissions~\cite{android2016manifest}.  The four
protection levels are ``normal'', ``dangerous'',
``signature'', and ``signature or system''.
Permissions such as "WAKE\_LOCK" which keep processor
from sleeping or screen from dimming are considered as
low risk. However, some permissions such as
"WRITE\_SMS" are considered dangerous due to its
ability to leak information and cause financial damages
such as texting to premium services. Work by Sun et
al.~\cite{sun2016sigpid} finds that normal permissions,
in addition to those classified as ``dangerous'' can
also significantly contribute to malicious behaviors.

\subsection{API-Based Learning}

Previously, we introduced the Android permission system
and how it can be used to detect the potential malware.
However, permission-based approaches may not always
yield accurate results as many permissions are commonly
used by both benign and malicious apps. To improve
accuracies of these approaches, API (Application
Programming Interface) information can also be used to
add context that can help distinguishing between benign
and malicious apps~\cite{arp2014drebin}. For
example, my analyzing calling contexts leading to
dangerous APIs such as \texttt{WRITE\_SMS}, one can
compare benign calling contexts and malicious calling
contexts. 

\subsection{Network Information Based Learning}

Dynamic information such as network traffic can be used
as learning feature to detect the malware. To do so,
input generation techniques such as Monkey are used to
generate inputs to execute benign and malicious
programs. The network traffic is then recorded and then
analyzed as features for machine learning. For example,
Shabtai et al.~\cite{shabtai2012andromaly} present a
Host-based Android malware detection
system to target the repackaging attacks. They conclude
that deviations of some benign behaviors can be
regarded as malicious ones.  Narudin et
al.~\cite{narudin2016evaluation} introduce a
TCP/HTTP based malware detection system. They extracted
basic information, (e.g. IP address), content based,
time based and connection based features to build the
detection system. Their approach can determine if an
app is malicious or not.

\subsection{Motivation}
In this paper, we propose a {\sc PU} learning-based approach (named as {\sc PUDroid}) that aims to remove contaminants as part of building benign dataset for robust malware detection in Android. We show that {\sc PUDroid} can protect the system when the benign dataset are infected by malware. To improve the performance of the {\sc PUDroid} and to prevent various malicious behavior, we also use embedding features including permissions, APIs, and network information to efficiently detect the malware.

\section{PUDroid Approach}
In this section, we introduce the main steps of our {\sc PUDroid} approach including feature selection and {\sc PU} learning. First, we present a feature selection strategy to help us find the informative features from a variety of Android features. Then, we use the resulted features to leverage {\sc PU} learning process along with different classifiers to remove contaminants for robust malware detection.

\subsection{Feature Selection for Android Data Generation} \label{feature}
Android apps can collect malicious and benign datasets, which usually contain various types of features such as permissions, APIs, IP address, activities, requested URLs, and services. However, not all of these features are effective for malware detection and many features such as activities, requested URLs, and services are difficult to explain due to large variability in these features. We need to perform feature selection to find the informative features from these available features. By using domain acknowledge of security, 
our feature selection strategy only uses explainable and helpful features to detect malware. In particular, we only focus on permissions, APIs, IP address and URLs. We first convert the URLs into IP address, and then introduce our feature selection strategy on how to select features from permissions, APIs and IP address features.

IP address can provide similar information as URLs but with fewer variability. For example, multiple URLs can point to the same IP address if they are alias. Part of IP address, e.g., the three most significant bytes, can
also provide company information (e.g., 216.59.192.xx tells us that the IP address in this range belong to
Google). Additionally, note that both IP address and URLs can suffer from spoofing but IP address provides the same information with much fewer data points. Therefore, it is beneficial to convert the URLs to a more effective representation (IP address). To achieve this goal, we remove the invalid URLs as they tend not to have corresponding IP address, and use the socket API to acquire IP address.

Based on the above results, we define an unbalanced feature selection strategy for malicious and benign datasets. A large difference between the numbers of malware and benign samples can lead to unbalanced removal of less contributing features. To overcome this issue, our feature removal process biases the criteria based on the ratio between the number of benign samples and malware samples. 
For example, if there are twice as many benign samples as malware samples, we then remove a seldom occurring feature from
malware dataset based on a threshold $tm$. For example, if feature occurs 5 times when we consider the entire
malware dataset, we would then set $tb$ for removal for the benign dataset to two (i.e., twice as many as $tm$
or 10 times). This can be formulated as:
\begin{align}
    \frac{tm}{tb} = \eta \cdot \frac{\# benign \; samples}{\# malware \; samples}
\end{align}
where $\eta$ is the coefficient that controls threshold selection and 
it is usually at least 2. Here we set $\eta=2$, which means any feature must be used at least two malware and $tm/tb$ benign apps will not be removed after feature selection.

After applying above feature selection strategy, we use $\mathbf{x}_1$, $\mathbf{x}_2$, and $\mathbf{x}_3$ to represent the selected feature subsets of permissions, APIs, and IP address, respectively. Our final resulted feature set is:
\begin{align*}
    \mathbf{x}:= \mathbf{x}_1 \cup \mathbf{x}_2 \cup \mathbf{x}_3
\end{align*}

% After we have the embedding features, we can create a
% vector for each app $s$ below:

% \begin{figure}[tbh!]
%     \centering
%     \includegraphics[width=0.30\textwidth]{xs.pdf}
%     %\caption{PCA Distribution of Whole Dataset}
%     %\label{fig:pca}
% \end{figure}

% \begin{mydef}

\subsection{PU Learning For Malware Detection} \label{method}

As mentioned earlier, unreliable negative examples (or
contaminants) can be unknowingly included in training
dataset due to the prolific rate of malware
creation~\cite{Acharya16}. 
We leverage {\sc PU} learning to detect and remove these contaminants from training datasets. 
Table~\ref{table:symbols} lists the basic symbols that will be used throughout this section.

\begin{table}[]
\centering
\caption{List of Basic Symbols}
\label{table:symbols}
\begin{tabular}{|c|l|}
\hline
Symbol  & Definition and description \\ \hline
$\mathbf{P}$    & positive group/marked malware set                                                                                                             \\ \hline
$\mathbf{U}$    & unlabeled group/mixed malware and benign apps                                                                                                  \\ \hline
$\mathbf{1}$    & a vector of all ones                                                                                                                                \\ \hline
$\mathbf{x}(s)$    & resulted feature vector of an app s                                                                                                                                \\ \hline
$z(s)$    & 1 means the app s is labeled, 0 otherwise                                                                                                  \\ \hline
$y(s)$    & 1 means the app s is true malware, 0 otherwise                                                                                                  \\ \hline
$p(\cdot)$ &the probability of an app \\ \hline
$f(\cdot)$ & malicious probability of an app without {\sc PU} learning   \\ \hline
$g(\cdot)$ & malicious probability of an app with {\sc PU} learning    \\ \hline
$M_d$   & \begin{tabular}[c]{@{}l@{}}classifier model without {\sc PU} learning, if $f(\cdot)$\textgreater0.5,\\ the app is malicious, otherwise is benign\end{tabular} \\ \hline
$M_h$   & \begin{tabular}[c]{@{}l@{}}classifier model with {\sc PU} learning,if $g(\cdot)$\textgreater0.5, \\ the app is malicious, otherwise is benign \end{tabular}    \\ \hline
$\sim$ & denotes the equivalent relation                                                                                           \\
\hline
\end{tabular}
\vspace{-5pt}
\end{table}

% \subsection{Framework of PU Learning For Android}

{\sc PU} learning is a semi-supervised technique for building a binary classifier on the basis of positive and unlabeled samples only. It is useful when we have not determined if an app is malicious or benign. In small collections of apps, labeling an app as malicious or benign can be done manually without too much effort. In large collections, however, manual identification may not be feasible. 
In this case, {\sc PU} learning can be applicable. 
In addition, in a scenario that the identification process
may produce inaccurate results (e.g., mistakenly
identifying malicious apps as benign apps), {\sc PU}
learning can also help to identify and remove these
unreliable negative samples.

In order to use {\sc PU} learning for malware detection, we divide our dataset into the positive group ($\mathbf{P}$) and the unlabeled group ($\mathbf{U}$), where the positive group contains malicious apps, and the unlabeled group contains benign apps and some contaminants consisting of unlabeled malicious apps.
% Our dataset is divided into two distinct groups. 
% We refer to the first group as the positive group ($P$);
% i.e., all apps in the positive group are marked by
% researchers as malware. We refer to the second group
% as the unlabeled group ($U$). In this group, benign
% apps and unlabeled malicious contaminants can be mixed
% together. 
%\commentsun{}
% \begin{mydef} 
% {\bf Discovery State} 
To differentiate positive and unlabeled apps, we define ``discovery state" ($z$) to indicate whether an app is labeled or not in the
dataset. For a given app $s$ in the group $\mathbf{P}$, if $s$ is marked as malicious, then $z(s) = 1$; otherwise, $z(s) = 0$. As a result, 
the ``discovery states" of apps in groups $\mathbf{P}$ and $\mathbf{U}$ are: $z(\mathbf{P}) = \mathbf{1}$ and $z(\mathbf{U}) = \mathbf{0}$.  
% {\bf Hidden Malware State}: 
Besides, each app has another label called ``hidden malware state" ($y$), which can expose whether an app is actually malicious or benign. Here an app with 1 is a known malware app, and 0 is a benign app. For example, if an app $s$ has been detected as malware,
then $y(s) = 1$; otherwise $y(s) = 0$. Based on this, we can check the ``hidden malware state" of each app in
group $\mathbf{P}$ and group $\mathbf{U}$. Since every app in group $\mathbf{P}$ is a known malware app, $y(\mathbf{P}) = \mathbf{1}$.
However, $y(\mathbf{U})$ can be either $\mathbf{1}$ or
$\mathbf{0}$ as both malicious contaminants and benign
apps can be in $\mathbf{U}$. As such, we have:
\begin{equation}\label{func1}
p(z(s) = 1|\mathbf{x}(s), y(s) = 0) = 0.
\end{equation}
% \end{mydef}
\noindent
where $p(\cdot)$ is the probability, and $\mathbf{x}(s)$ is the feature vector extracted
for app $s$ (i.e., [0, 1, 0, 0,..., 1], 1 denotes that the app requests the
permission and 0 otherwise).

%\commentsun{}
The goal of malware detection is to build a malware discovery model $M_d
\sim f(\mathbf{x}(s)): \mathbb{R}^d \rightarrow \{1, 0\}$ from $\mathbf{P}$ and $\mathbf{U}$, where $d$ is the dimension of the
$x(s)$, which represents the number of the features. $1$ means $s$ is identified as malware, $0$ means $s$
is not identified as malware. Given an app $s$, by applying $M_d \sim f(\mathbf{x}(s))$, the discovery
probability of $s$ as a malware is: $p(z(s) = 1 |
\mathbf{x}(s))$. While our ultimate goal is to infer the true label of a given malicious app $s$ (i.e., $y(s)$). Besides the \textit{discovery probability}, we also need to build a hidden malware detection model $M_h \sim g(\mathbf{x}(s))$ based on the detected malware and the unlabeled app sets. Formally, we define the
probability that an app $s$ is indeed malicious (i.e., $y(s) = 1$) as:
$p(y(s) = 1|\mathbf{x}(s))$. In the following, we discuss the details.

% By using $P$, $U$ and notations $z(\cdot)$, we can
% build a malware discovery model $M_d$ to detect whether
% an app has been identified as malware by researchers.
% Given an app $s$, by applying $M_d$, the discovery
% probability of $s$ as a malware is: 
% \begin{mydef}
% {\bf Discovery Probability}: 
% The probability that an
% app's discovery state is detected to be malicious is
% defined as the discovery probability: $P(z(s) = 1 |
% \mathbf{x}(s))$.
% \end{mydef}
% \begin{mydef} 
% {\bf Hidden Malware probability}: 
% The probability that an app $s$' hidden malware state is detected to be
% malicious (i.e., $y(s) = + 1$) is defined as the
% malicious rate of an app $s$: $P(y(s) =
% 1|\mathbf{x}(s))$.
% \end{mydef}
\textbf{Assumption}: (Malware Discovered at Random): Assume malware samples are randomly detected by analysis, the
probability of detection is not relevant with the features created, then it has: 
\begin{equation}\label{func2}
   p(z(s) = 1|\mathbf{x}(s); y(s) = 1 ) = p(z(s) = 1 | y(s) = 1).
\end{equation}

To build the proposed hidden malware detection model
$M_h$, we should ideally know which apps are actually
malware. For this purpose, we first prove Lemma 1~\cite{elkan2008learning}.
\begin{lemma}
Suppose the ``Malware Discovered at Random'' assumption holds, then
\begin{equation}\label{func3}
p(y(s) = 1|\mathbf{x}(s)) =\frac{p(z(s) = 1|\mathbf{x}(s))}{p(z(s) = 1|y(s) = 1)}.
\end{equation}
% $p(y(s) = 1|\mathbf{x}(s)) = p(z(s) = 1|\mathbf{x(s)})/c$ 
% where $c = p(z(s) = 1|y(s) = 1).$
\end{lemma}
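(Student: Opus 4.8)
The plan is to compute the ``discovery probability'' $p(z(s)=1|\mathbf{x}(s))$ by conditioning on the hidden malware state $y(s)$, to discard the resulting benign branch with Eq.~\eqref{func1}, and then to collapse the surviving conditional into a feature-independent constant using the ``Malware Discovered at Random'' assumption, Eq.~\eqref{func2}. A single rearrangement of the identity so obtained then gives the claimed formula.

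Concretely, I would first apply the law of total probability, conditioning throughout on $\mathbf{x}(s)$:
\begin{equation*}
p(z(s)=1|\mathbf{x}(s)) = p(z(s)=1,\,y(s)=1|\mathbf{x}(s)) + p(z(s)=1,\,y(s)=0|\mathbf{x}(s)).
\end{equation*}
The second term factors as $p(z(s)=1|\mathbf{x}(s),\,y(s)=0)\,p(y(s)=0|\mathbf{x}(s))$, whose first factor is $0$ by Eq.~\eqref{func1} --- a benign app is never marked as malware --- so that term drops out entirely. The first term factors as $p(z(s)=1|\mathbf{x}(s),\,y(s)=1)\,p(y(s)=1|\mathbf{x}(s))$, and Eq.~\eqref{func2} lets me replace $p(z(s)=1|\mathbf{x}(s),\,y(s)=1)$ by $p(z(s)=1|y(s)=1)$, which no longer depends on $\mathbf{x}(s)$. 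Putting the pieces together yields $p(z(s)=1|\mathbf{x}(s)) = p(y(s)=1|\mathbf{x}(s))\,p(z(s)=1|y(s)=1)$, and dividing both sides by $p(z(s)=1|y(s)=1)$ produces Eq.~\eqref{func3}.

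Because every step is a one-line use of the chain rule, the law of total probability, or one of the two stated hypotheses, I do not expect a genuine computational obstacle; the only thing requiring care is the bookkeeping of which hypothesis does what. Eq.~\eqref{func1} is what forces the $y(s)=0$ branch to vanish (a ``discovered'' app must be genuine malware), while Eq.~\eqref{func2} is what turns the feature-dependent detection probability into a single constant $c:=p(z(s)=1|y(s)=1)$ that can be pulled outside. I would also note that the statement implicitly assumes $c>0$ --- a genuine malware has a nonzero chance of being discovered --- so that the final division is well defined; this is automatic in our setting, since the positive set $\mathbf{P}$ is nonempty.
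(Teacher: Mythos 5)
Your proof is correct and follows essentially the same route as the paper's: both reduce $p(z(s)=1|\mathbf{x}(s))$ to the joint probability $p(y(s)=1,z(s)=1|\mathbf{x}(s))$, factor it by the chain rule, and invoke the ``Malware Discovered at Random'' assumption to drop the dependence on $\mathbf{x}(s)$. The only cosmetic difference is that you eliminate the $y(s)=0$ branch explicitly via Eq.~(\ref{func1}) and the law of total probability, whereas the paper inserts the equivalent fact $p(y(s)=1|\mathbf{x}(s),z(s)=1)=1$ as a multiplicative factor; your added remark that $p(z(s)=1|y(s)=1)>0$ is needed for the division is a small but worthwhile point the paper leaves implicit.
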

\begin{proof}
By holding the assumption, we have:
\begin{align} \label{func4}
 ~~~~~~~& p(z(s) = 1|\mathbf{x}(s)) \nonumber \\
~~~~~~~ = & ~p(z(s) = 1|\mathbf{x}(s)) \cdot p(y(s) = 1|\mathbf{x}(s), z(s) = 1) \nonumber\\
~~~~~~~    = & ~p(y(s) = 1, z(s) = 1|\mathbf{x}(s)) \nonumber\\
~~~~~~~    = & ~p(y(s) = 1|\mathbf{x}(s)) \cdot p(z(s) = 1|y(s) = 1, \mathbf{x}(s)) \nonumber\\
~~~~~~~    = & ~p(y(s) = 1|\mathbf{x}(s)) \cdot p(z(s) = 1|y(s) = 1).
\end{align}
% \begin{flalign*}
%  ~~~~~~~& p(z(s) = 1|\mathbf{x}(s)) \\
% ~~~~~~~ = & ~p(z(s) = 1|\mathbf{x}(s)) \cdot p(y(s) = 1|\mathbf{x}(s), z(s) = 1) &\\
% ~~~~~~~    = & ~p(y(s) = 1, z(s) = 1|\mathbf{x}(s)) &\\
% ~~~~~~~    = & ~p(y(s) = 1|\mathbf{x}(s)) \cdot p(z(s) = 1|y(s) = 1, \mathbf{x}(s)) &\\
% ~~~~~~~    = & ~p(y(s) = 1|\mathbf{x}(s)) \cdot p(z(s) = 1|y(s) = 1).
% \end{flalign*}
% \begin{flalign*}
% p(z(s) = 1|\mathbf{x}(s))  = & p(z(s) = 1|\mathbf{x}(s)) \cdot p(y(s) = 1|\mathbf{x}(s), z(s) = 1) &\\
%     = & p(y(s) = 1, z(s) = 1|\mathbf{x}(s)) &\\
%     = & p(y(s) = 1|\mathbf{x}(s)) \cdot p(z(s) = 1|y(s) = 1, \mathbf{x}(s)) &\\
%     = & p(y(s) = 1|\mathbf{x}(s)) \cdot p(z(s) = 1|y(s) = 1).
% \end{flalign}
Then, by dividing both sides of Eq.~(5) by $p(z(s) = 1|y(s) = 1)$, we arrive at Lemma 1.
\end{proof}

From Eq.~(4), if we want to build $M_h$ by
calculating the \textit{hidden malware probability}, we
can use $f(\mathbf{x}(s))/p(z(s) = 1|y(s) = 1)$, where
$f(\mathbf{x}(s))$ is the malware probability of the an
app $s$ of $M_d$. Specifically, we have: 
\begin{equation}\label{func5}
   f(\mathbf{x}(s)) = p(z(s) = 1|\mathbf{x}(s)).
\end{equation}

On the other hand, $p(z(s) = 1|y(s) = 1)$ can be calculated by
a validation set with applying $M_d$. First, we
randomly select apps from $\mathbf{P} \cup \mathbf{U}$ to be the
validation set $\mathbf{V}$, and choose the subset $\mathbf{P}'$ from $\mathbf{V}$
with the positive label, i.e. $z(\mathbf{P}') = 1$. The
estimator of $p(z(s) = 1|y(s) = 1)$ is the average
value of $g(\mathbf{x}(s))$ for $\mathbf{x}(s)$ in
$\mathbf{P}'$, then we have:
\begin{equation}\label{func6}
p(z(s) = 1|y(s) = 1) \sim   e = 1/n \cdot \sum_{x \in \mathbf{P}'}f(\mathbf{x}(s)).
\end{equation}
where $n$ is the cardinality of $\mathbf{P}'$ and 
the estimator $e$ is the average value of $f(\mathbf{x}(s))$ for $x$
in $\mathbf{P}'$. Since $e$ is based on a certain number of data
instances, it has a low variance and is preferable in practice \cite{elkan2008learning}.
Notice that to compute $e$, we need to specify the size of the set $\mathbf{P}'$ and the size of
the validation set $\mathbf{V}$. We explain how to set the size
and evaluate our system in Section~\ref{evaluation}.

With $e$ and the classifier model $M_d \sim f(\mathbf{x}(s))$ on labels $z(s)$, 
we can adjust to a classifier $M_h \sim g(\mathbf{x}(s))$  
on relation labels $y(s)$ as follows:
\begin{align}\label{func7}
M_h \sim g(\mathbf{x}(s)) &= p(y(s) = 1|\mathbf{x}(s)) \nonumber \\
   &= \frac{f(\mathbf{x}(s))}{e}.
\end{align}

\begin{figure}[t]
    \centering
    \includegraphics[width=0.50\textwidth]{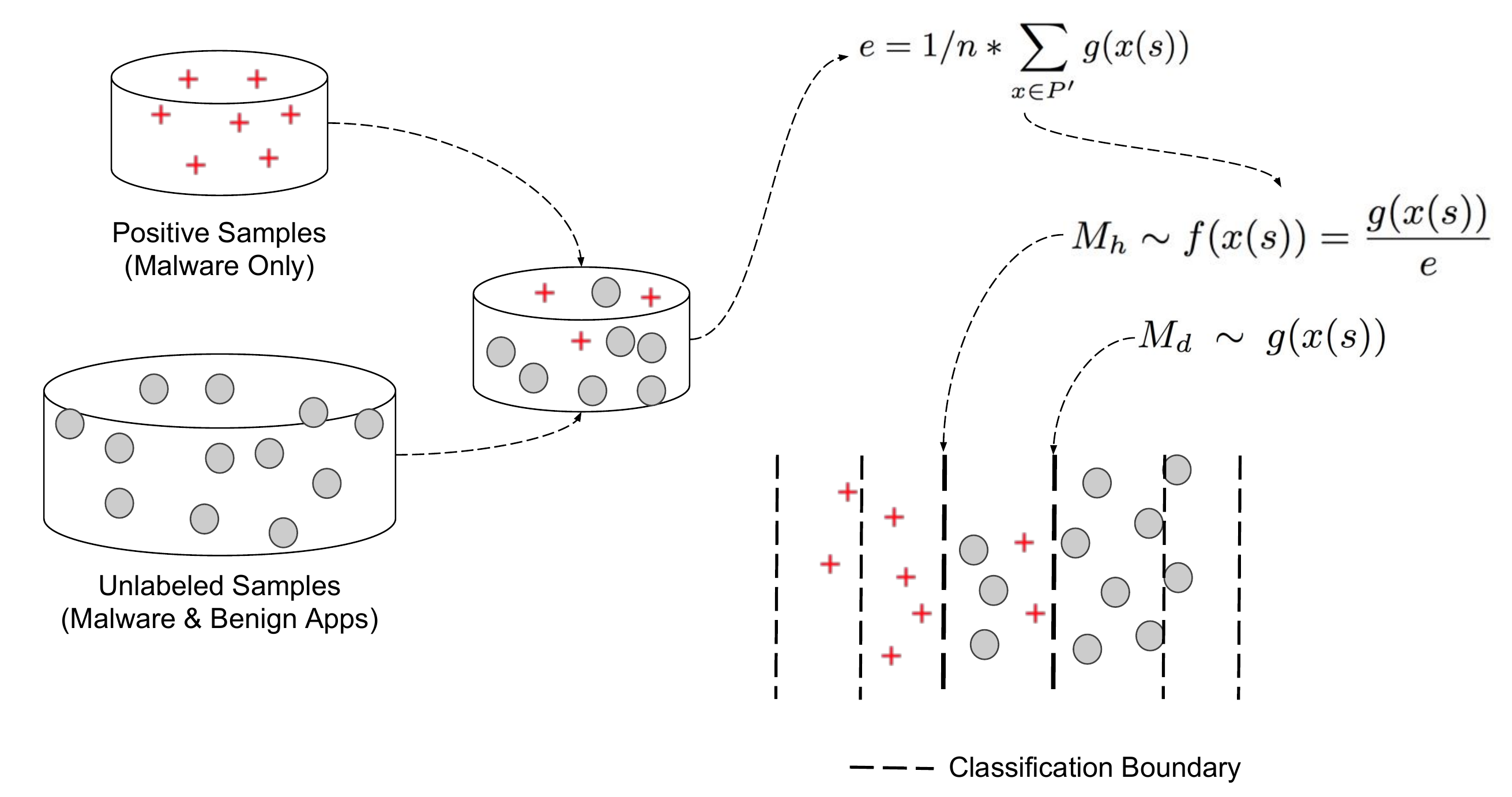}
    \caption{An Overview of Our PU Learning Process}
    \label{fig:PULearning}
     \vspace{-5pt}
\end{figure}

% Since $e$ is based on a certain number of data
% instances, it has a low variance and is preferable in practice~\cite{elkan2008learning}. 

% \begin{lemma}
% If $g(\mathbf{x}(s)) = p(z(s) = 1|\mathbf{x}(s))$ for all $s \in P \cup U$, $g(\mathbf{x}(s)) = p(z(s) = 1|y(s) = 1)$ for all $s \in P$,
% and $e = p(z(s) = 1|y(s) = 1)$.
% \end{lemma}
% We show the proof of Lemma 2 as below:
% \begin{flalign*}
% g(\mathbf{x}(s)) &= p(z(s) = 1|\mathbf{x}(s)) \\
%     &= p(z(s) = 1|\mathbf{x}(s), y(s) = 1) \cdot p(y(s) = 1|\mathbf{x}(s)) \\
%     &+ p(z(s) = 1|\mathbf{x}(s), y(s) = 0) \cdot p(y(s) = 0|\mathbf{x}(s)) \\
%     &= p(z(s) = 1|\mathbf{x}(s), y(s) = 1) \cdot 1 + 0 \cdot 0 \;since\; s \; \in P \\
%     &= p(z(s) = 1|y(s) = 1).
% \end{flalign*}
    
% Then, we have the final solution by combining
% Eqs.~(\ref{func3}) and (\ref{func5}):

Fig.~\ref{fig:PULearning} shows an overview of our {\sc PU} learning process.

\textbf{Discussion}: Here we give an example to show
how the Eq.~(\ref{func7}) works for positive and
unlabeled datasets. If all apps can be clearly
separated into malicious and benign groups, (i.e., no
malicious apps in the unlabeled group) the malicious
probability ($g(\mathbf{x}(m))$) of most random malware
$m$ in positive group should be close to 1. However,
when the unlabeled group contains a large number of
malicious contaminants, the malicious probability of
most random malware $m$ in positive group will
decrease. The worst case is that the $g(\mathbf{x}(m))$
close or even less than 0.5. We know that if
$g(\mathbf{x}(m)) < 0.5$, the malware $m$ will be
classified as a benign app by the detector. 

To address this problem, when we first use a training set
to build the detector, we select a subset $\mathbf{P}_M$
from positive group where $y(\mathbf{P}_M) = \mathbf{1}$. Then
we calculate the average malicious probability of $P_M$
by using $g(\mathbf{x}(\mathbf{P}_M))$. If we find the average
malicious probability of $\mathbf{P}_M$ is close to 0.5 rather
than 1.0, we increase the malicious probability
for each app in the training set. Then more malicious
contaminants will be detected from the unlabeled group.
If the system is faced with malicious contaminants,
the malicious probability of every app will decrease
including benign apps by the detector. So most benign
apps will still be classified as benign after we
increase the malicious probability for them. This is
the framework of {\sc PU} learning to prevent the malicious
contaminants.

%\subsection{Building Detectors with Various Classifier}

%\commentws{James, you need to add more citation here. Adding just your work is not enough since you did not come up with these techniques. Please replace ``addcite'' with actual citations to these prior papers.}

In order to apply {\sc PU} learning for the dataset with
group $\mathbf{P}$ and group $\mathbf{U}$, we need to implement the
learning algorithms to build the malware detection
system. Various learning algorithms have been used in
mobile security
before~\cite{sun2016sigpid},
we choose three most frequently used learning
algorithms: Support Vector Machine, Decision Tree and
Random Forest.

\section{Evaluation} \label{evaluation}
We conducted experiments to
evaluate the performance of {\sc PUDroid} in
identifying contaminants in Android datasets.
We conducted our experiments using a large-scale
real-world collection of apps.  
Our experiments were
done to answer four research questions (RQs). 

\vspace{3pt}\noindent \textbf{RQ 1:} How effective is
\textsc{PUDroid} in removing malicous contaminants from
benign dataset? In this experiment, we create
contaminated benign dataset and then compare the
performances between a system with \textsc{PUDroid} and
without \textsc{PUDroid}. 

\vspace{3pt}\noindent \textbf{RQ 2:} How do changes in
the number of malicious contaminants in benign data
sets affect the performance of \textsc{PUDroid}? 
In this experiment, we systematically increased the
number of malicious contaminants in our benign datasets. 

\vspace{3pt}\noindent
\textbf{RQ 3:} How effective is \textsc{PUDroid} in
removing unknown malicious contaminants from benign
dataset? In this experiment, we include a family of
malware into each benign dataset and evaluate if
\textsc{PUDroid} can detect these contaminants. 

\vspace{3pt}\noindent \textbf{RQ 4:} How effective is
\textsc{PUDroid} in removing benign contaminants from
malicious dataset?  Our first two focuses were on
building accurate benign dataset. In this experiment,
we reverse the contamination pattern by including
benign apps in the malicious dataset and observe the
performance of \textsc{PUDroid} to detect these benign
contaminants.

%in the contexts of three possible contamination
%scenarios that have (i) contaminated data sets (ii)
%moderately contaminated data sets (e.g., releases of
%variations of malware from the same family).  Because
%placing benign apps in malicious data sets can also
%occur, we also evaluate our system in the third
%scenario (iii) to remove such contaminants (in this
%case, contaminants are benign apps in malicious
%data sets).  
\vspace{3pt}
\noindent
In all experiments, we then compared our results with
those from several state-of-the-art detection methods.
% We first describe our methodology to evaluate {\sc
% PUDroid}. This includes the data set that we used and
% various measures to evaluate performance. We then
% apply {\sc PUDroid} in three possible attack
% scenarios, such as  
\subsection{Data Sets}
We used the dataset based on prior work by Arp et
al.~\cite{arp2014drebin}. The dataset includes 5,560
malware and 123,453 benign apps. We then extracted
relevant features using a strategy that differs from
theirs. In their work, they used many features
including permission and API information. However, not
all features that they used are effective for malware
detection and many features such as activities,
requested URLs, and services are difficult to explain
due to large variability in these features (e.g.,
there are over 10,000 activities and services and
200,000 URLs in the dataset). Having too many features
can result in very sparse matrix, leading to high
machine learning overhead and over-fitting issues.
Then we can apply feature selection to select informative 
features.

% \end{mydef}

% \begin{figure*}[tbh!]
% \begin{center}
% \mbox{
% \begin{minipage}[t]{.33\textwidth}
% \includegraphics[width=1.00\textwidth]{}
% \end{minipage}
% \begin{minipage}[t]{.33\textwidth}
% \includegraphics[width=1.00\textwidth]{}
% \end{minipage}
% \begin{minipage}[t]{.33\textwidth}
% \includegraphics[width=1.00\textwidth]{}
% \end{minipage}
% }
% \caption{Accuracy of PUDroid with SVM (left), Decision
% Tree (center), Random Forest (right)}
% \end{center}
% \end{figure*}

% \begin{figure*}[tbh!]
% \begin{center}
% \mbox{
% \begin{minipage}[t]{.33\textwidth}
% \includegraphics[width=1.0\textwidth]{}
% \end{minipage}
% \begin{minipage}[t]{.33\textwidth}
% \includegraphics[width=1.0\textwidth]{}
% \end{minipage}
% \begin{minipage}[t]{.33\textwidth}
% \includegraphics[width=1.0\textwidth]{}
% \end{minipage}
% }
% \caption{F-Measure of PUDroid with SVM (left), Decision
% Tree (center), Random Forest (right)}
% \label{fig:results}
% \end{center}
% \end{figure*}

After feature selection, our system kept 2,200
features.  Drebin, on the other hand, used over 300,000
features.  Even with the reduced number of features,
our dataset contains too many dimensions. To address
this problem, we use
PCA to project the whole dataset to 2-dimension representation of the dataset,
Fig.~\ref{fig:pca} shows the distribution of our whole
dataset. We can find most malware and benign apps can
be separated even with low dimension representation
with our features. That's the reason the machine
learning can help to sepearate the malware and benign
apps when dataset is without any contaminants.

\begin{figure}[tbh!]
    \centering
    \includegraphics[width=0.39\textwidth]{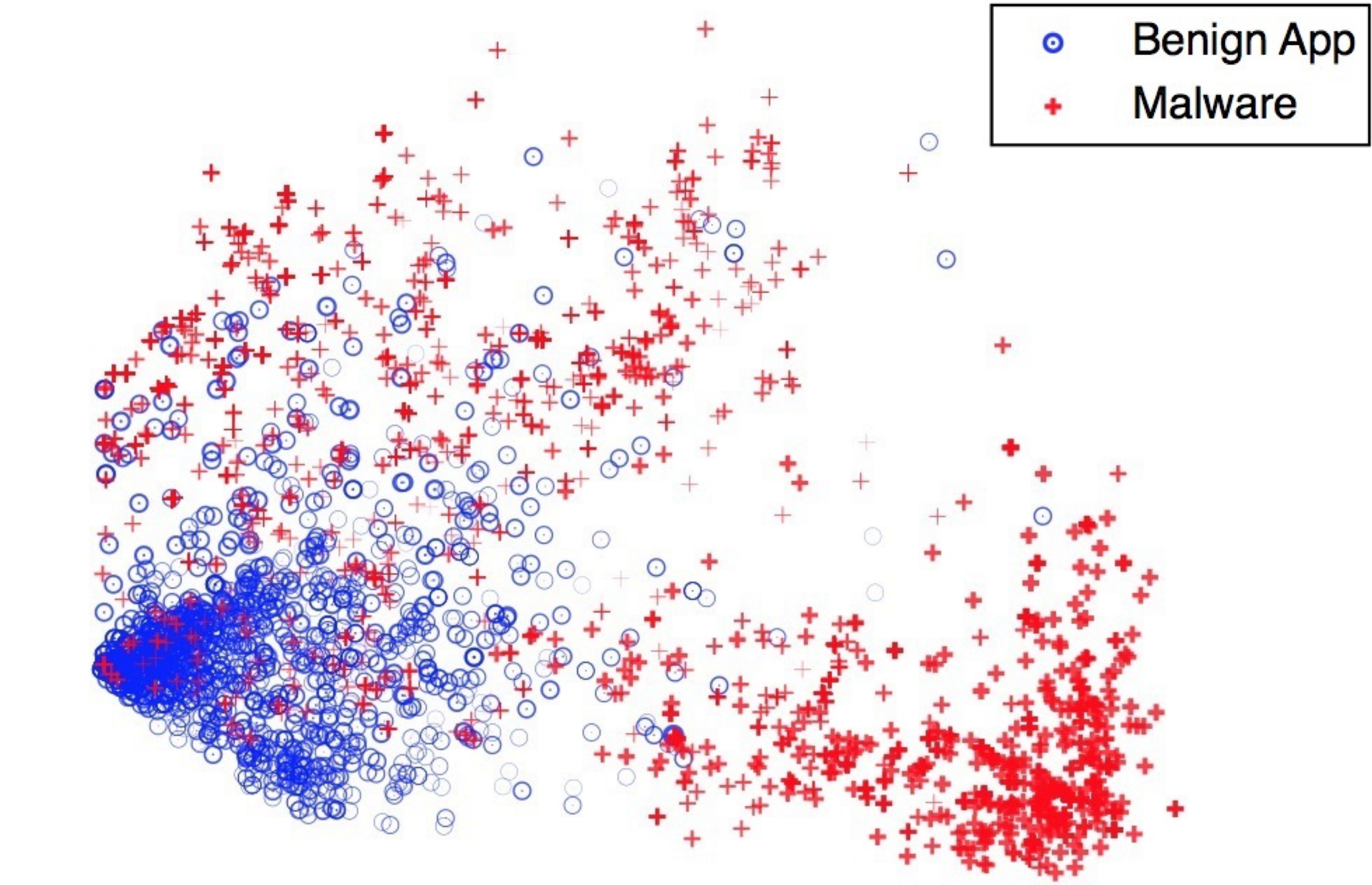}
    \caption{PCA Distribution of Whole Dataset}
    \label{fig:pca}
    \vspace{-5pt}
\end{figure}

\begin{figure*}[tbh!]
\begin{center}
\mbox{
\begin{minipage}[t]{.33\textwidth}
\includegraphics[width=1.0\textwidth]{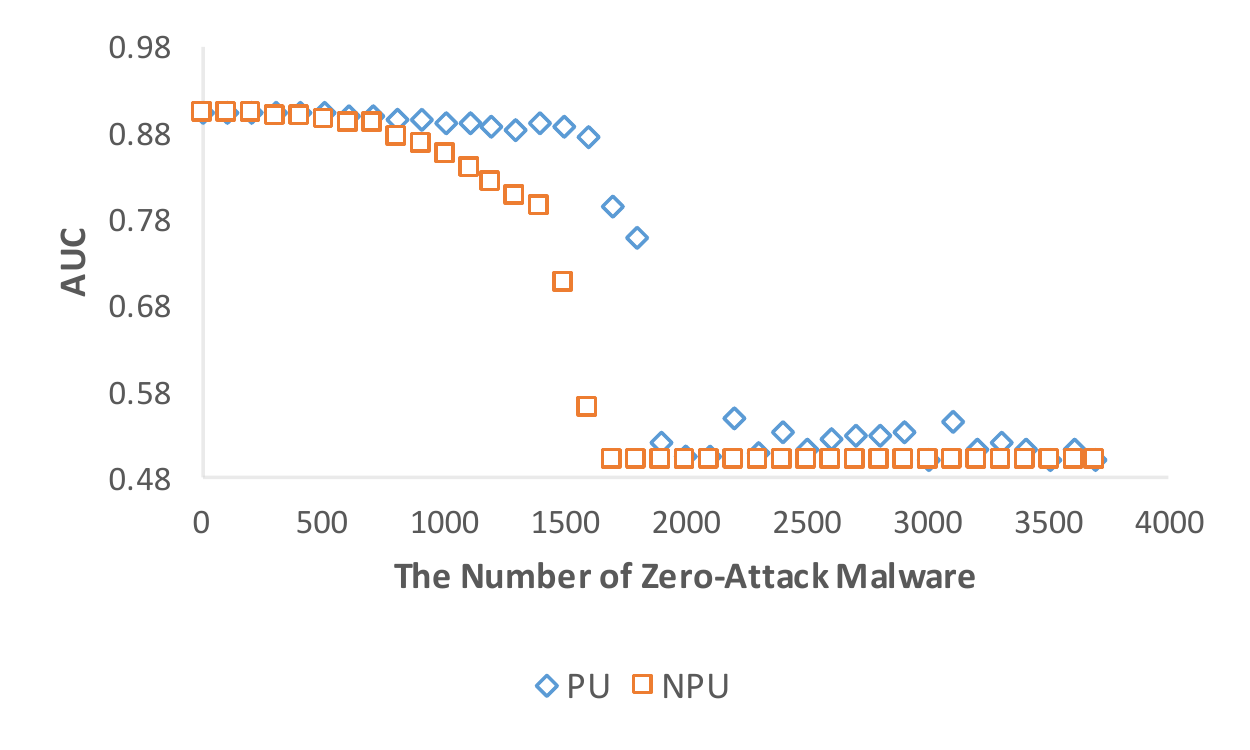}
\end{minipage}
\begin{minipage}[t]{.33\textwidth}
\includegraphics[width=1.0\textwidth]{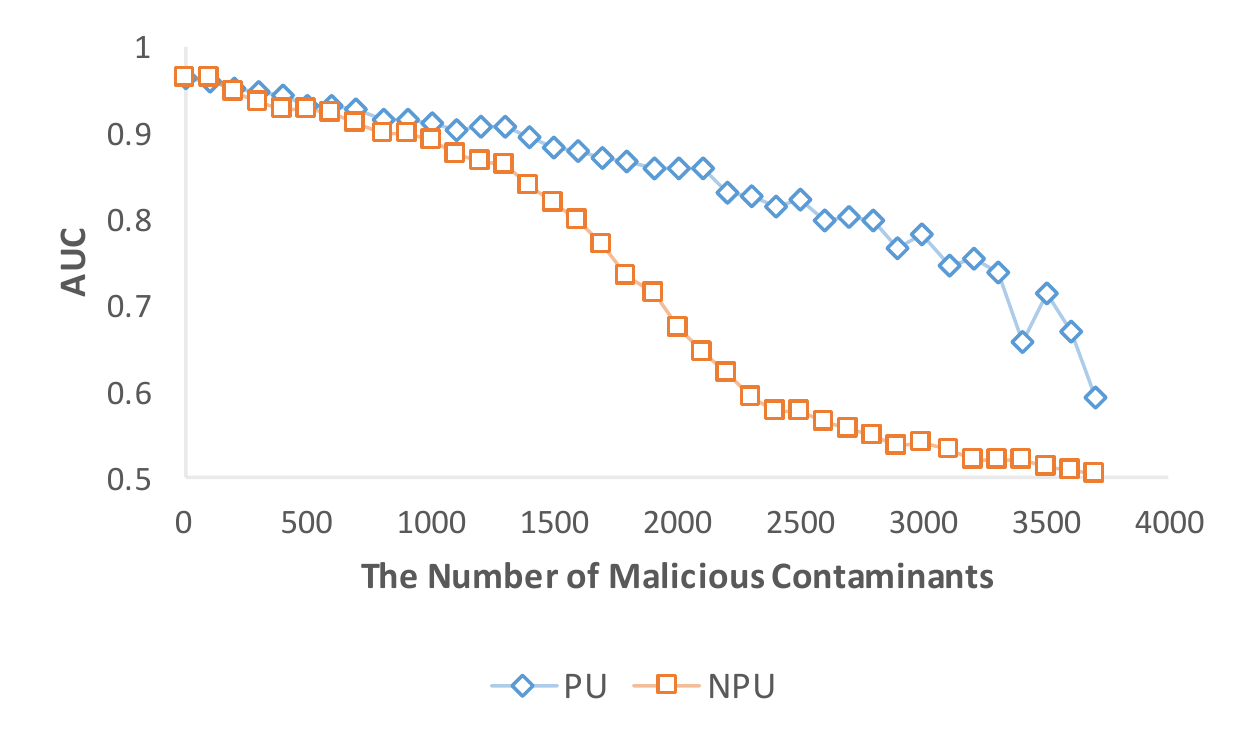}
\end{minipage}
\begin{minipage}[t]{.33\textwidth}
\includegraphics[width=1.0\textwidth]{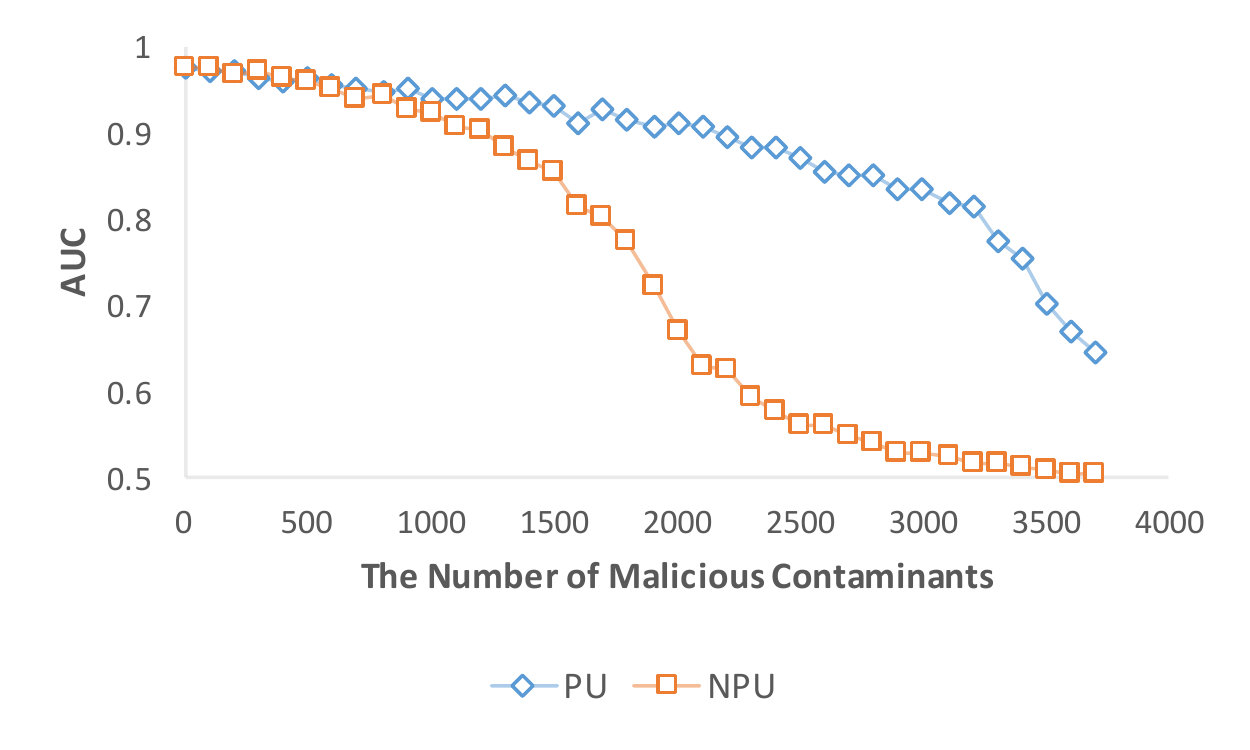}
\end{minipage}
}
\caption{AUC of PUDroid with SVM (left), Decision Tree
(center), Random Forest (right)}
\label{fig:results}
\end{center}
\vspace{2pt}
\end{figure*}

% \begin{figure*}[tbh!]
% \begin{center}
% \mbox{
% \begin{minipage}[t]{.33\textwidth}
% \includegraphics[width=1.0\textwidth]{svm_acc.pdf}
% \end{minipage}
% \begin{minipage}[t]{.33\textwidth}
% \includegraphics[width=1.0\textwidth]{dt_acc.pdf}
% \end{minipage}
% \begin{minipage}[t]{.33\textwidth}
% \includegraphics[width=1.0\textwidth]{rf_acc.pdf}
% \end{minipage}
% }
% \caption{Accuracy of PUDroid with SVM (left), Decision Tree
% (center), Random Forest (right)}
% \label{fig:results}
% \end{center}
% \vspace{2pt}
% \end{figure*}
\begin{table*}[tbh!]
\centering
\caption{Evaluation of Highly Contaminated Datasets}
\label{table:largescale}
\begin{tabular}{|c|c|c|c|c|c|c|}
\hline
Scale Rate & Random Forest/PU & Random Forest/NPU & Difference & Decision Tree/PU & Decision Tree/NPU & Difference \\ \hline
100\%      & 83.32\%          & 44.25\%             & 39.07\%    & 73.18\%          & 47.06\%           & 26.12\%    \\ \hline
200\%      & 76.39\%          & 14.30\%             & 62.10\%    & 62.81\%          & 16.46\%           & 46.36\%    \\ \hline
300\%      & 71.51\%          & 8.69\%              & 62.82\%    & 57.15\%          & 11.83\%           & 45.32\%    \\ \hline
400\%      & 64.91\%          & 6.19\%              & 58.72\%    & 51.07\%          & 7.97\%            & 43.10\%    \\ \hline
500\%      & 64.74\%          & 3.32\%              & 61.42\%    & 49.70\%          & 5.86\%            & 43.85\%    \\ \hline
600\%      & 59.70\%          & 3.47\%              & 56.24\%    & 47.23\%          & 4.79\%            & 42.44\%    \\ \hline
700\%      & 54.22\%          & 2.50\%              & 51.71\%    & 44.32\%          & 4.66\%            & 39.65\%    \\ \hline
800\%      & 52.14\%          & 2.14\%              & 50.00\%    & 43.54\%          & 3.82\%            & 39.72\%    \\ \hline
\end{tabular}
\vspace{2pt}
\end{table*}

\section{Results}
\label{results}
In this section, we report the experimental results to
answer the four proposed research questions. We used three measures, \emph{accuracy}, \emph{AUC} and \emph{F-measure} to evaluate the performance of {\sc PUDroid}.

\subsection{RQ 1: Effectiveness of PUDroid} 
In order to evaluate {\sc PUDroid}, we first choose 1/3
of all malware samples and 1/3 of benign apps as the
testing set. We then use the remaining malware and
benign apps as the training set. To create an unlabeled
group, we also include malware into the benign apps in
multiple iterations using the following process.
Initially, we have no malicious contaminants, meaning
there is no unlabeled samples in this case. Every
sample is negative in the unlabeled group, which indicates
benign sample. In the next iteration, we randomly
select 100 malware samples from the training set as the
malicious contaminants. We then add these malware
samples to the benign portion of the training set and
remove them from the malware portion of the training
set. That is, we make these samples negative. Now, we
have a positive group and unlabeled group which
contains both malware and benign apps. We then repeat
above step and randomly select more malware samples in
the subsequent iterations. In summary, to build
unlabeled dataset, we randomly include in the benign
dataset, $100 \cdot N$ malware samples as malicious
contaminants in the $N_{th}$ iteration. 

We report the performance of \textsc{PUDroid}
using our dataset in Fig.~\ref{fig:results}. Note that, due to limited space, we only report \emph{AUC} values, but the conclusion is consistent across other measures. Our evaluation is based on the comparison of the performance of {\sc PUDroid} and a classification
method (denoted as PU in the figure) with that of using
the same classification method alone (denoted in the figure as No PU or NPU). We used all three classification learning methods. Based on the results, we can make the
following conclusions.

\begin{itemize}
\vspace{2pt}  
\item As shown in Fig.~\ref{fig:results}, {\sc PUDroid}
can work well with all three classification methods.
\vspace{2pt}      
\item If there are no malicious contaminants in the
testing dataset, {\sc PUDroid} yields the same
performance as using methods without PU learning.
\vspace{2pt}      
\item The performance of {\sc PUDroid} with Support
Vector Machine (SVM) is initially stable but degrades
quickly when the training dataset contains a large
number of malicious contaminants.
\vspace{2pt}      
\item The performance of Random forest is slightly
better than that of Decision Tree.
\vspace{2pt}      
\item When the dataset contains only a small number of
malicious contaminants, both of {\sc PUDroid} system
and methods without PU learning work well. 
    
\end{itemize}

% \begin{figure*}[tbh!]
% \begin{center}
% \mbox{
% \begin{minipage}[t]{.33\textwidth}
% \includegraphics[width=1.00\textwidth]{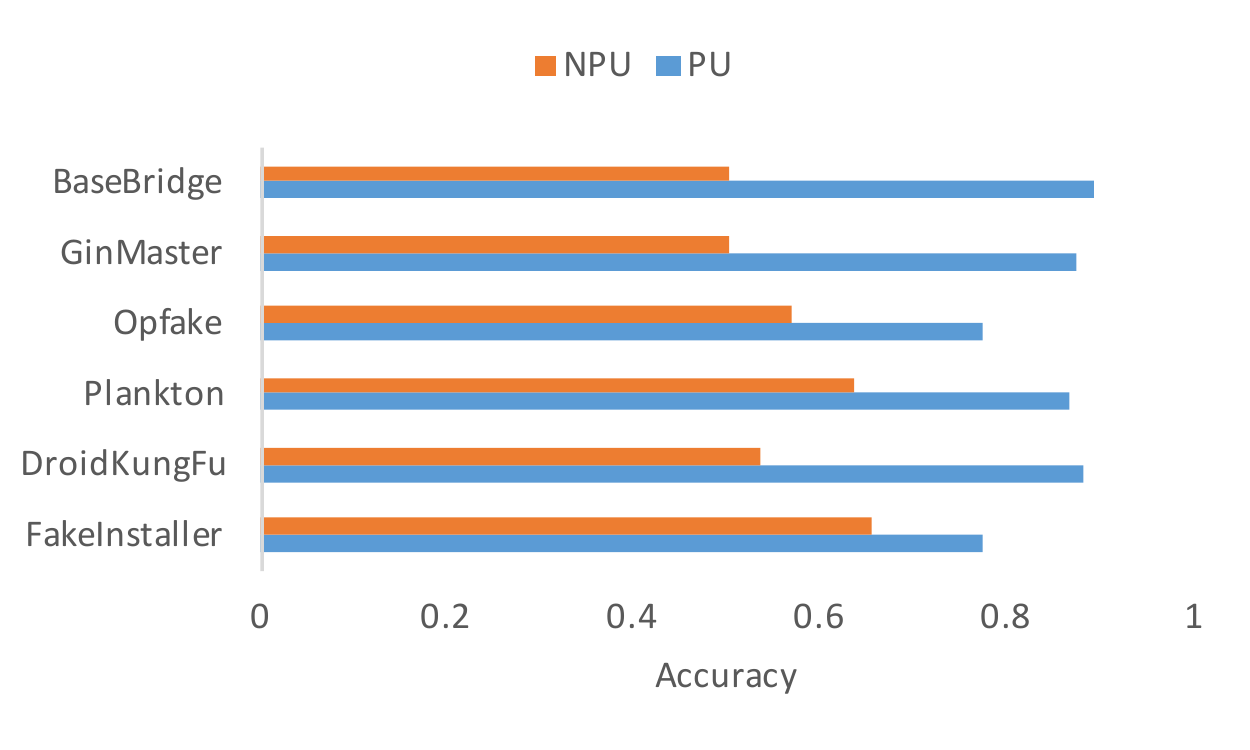}
% \end{minipage}
% \begin{minipage}[t]{.33\textwidth}
% \includegraphics[width=1.0\textwidth]{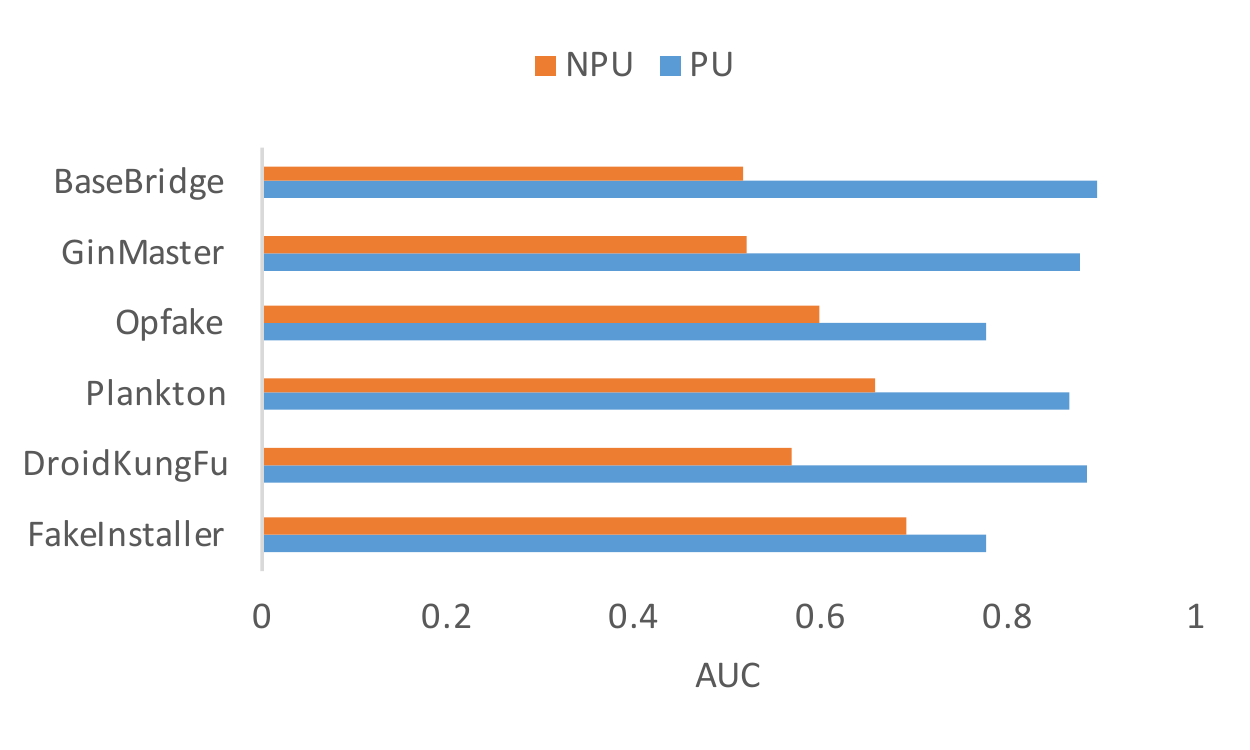}
% \end{minipage}
% \begin{minipage}[t]{.33\textwidth}
% \includegraphics[width=1.0\textwidth]{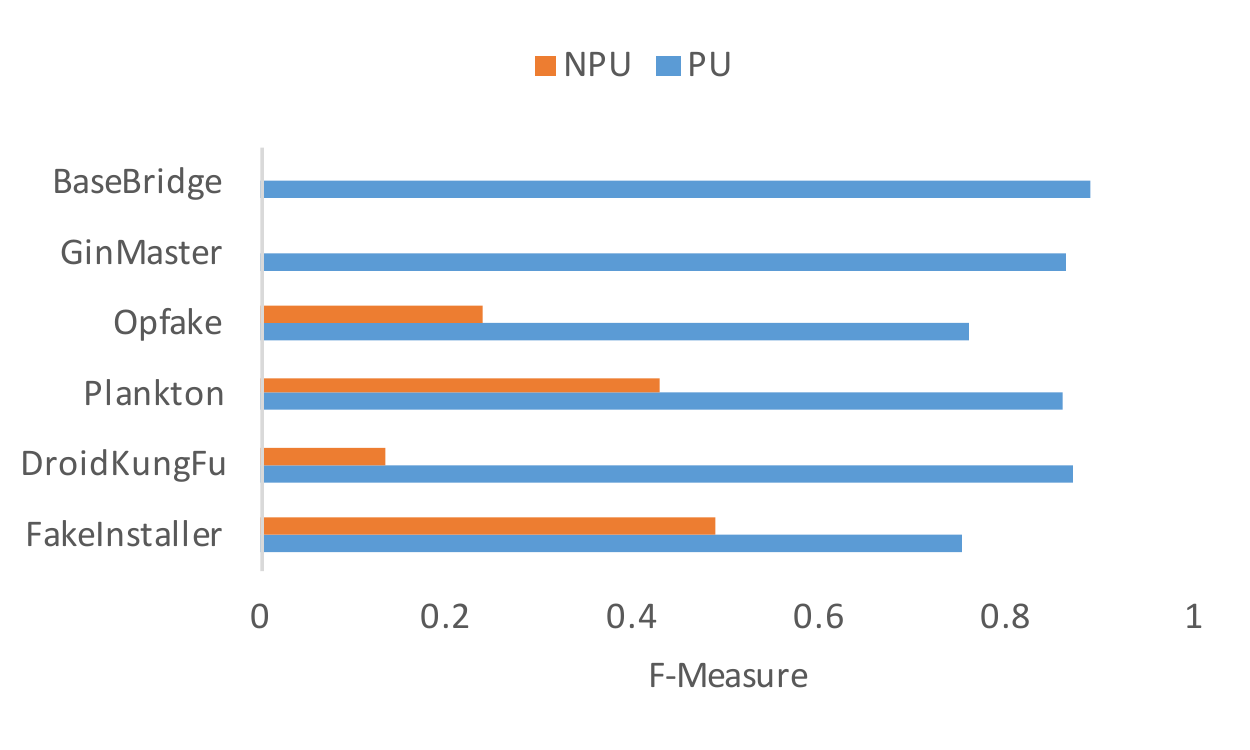}
% \end{minipage}
% }
% \caption{Performance of PUDroid when faced with new malware contaminants}
% \label{fig:results2}
% \end{center}
% \end{figure*}

\subsection{RQ 2: Variable Levels of
Contaminants}

In this section, we evaluate the malware detection
performance of {\sc PUDroid} when the benign dataset
contains a large portion of contaminants. To do so, we
apply the following method to create the unlabeled dataset. 

In $N_{th}$ case, we set the number of malicious
contaminants in the training dataset to be $N$ times
the number of malware in the training dataset. For
example, in the first case, we set the ratio of the number
of malicious contaminants and the number of malware in
the training set to be 1:1. In the $N_{th}$ case, the
ratio becomes $N:1$. The resulting unlabeled dataset
should test our system's ability to deal with a large
number of unlabeled malware samples.

In previous section, we showed that {\sc PUDroid} with
SVM can work well when the number of malicious
contaminants is small to moderate. However, it does not
work well when the number of malware is large. As such,
we only evaluate Random Forest and Decision Tree in
this scenario.

From Table~\ref{table:largescale}, we evaluate the ratios from 1:1 to 8:1, and report the malware detection rate (also referred to as True Positive Rate or Recall). Based on the results, we have the following observations:

\begin{itemize}
\vspace{2pt}  
\item Random Forest or Decision Tree alone cannot work
well when they face a large number of malicious
contaminants. For example, when the ratio is 3:1, the
detection rate of Random Forest and Decision Tree are
8.69\% and 11.83\%, respectively. However, when we
applied {\sc PUDroid}, the detection rates increases to
71.51\% and 57.15\%, respectively. 
\vspace{2pt}    
\item {\sc PUDroid} with Random Forest yields higher
detection rates than those produced by {\sc PUDroid}
with Decision Tree. However, if the PU learning process is not
used, Decision Tree performs better than Random Forest.
For example, Decision Tree yields the detection rate
of 11.83\% while Random Forest only yields 8.69\% when
the ratio is 3:1.
\vspace{2pt}  
\item In extreme situations, Random Forest yields a
detection rate of 52.14\% when the ratio is 8:1. To put
this into perspective, this result is better than using
Random Forest alone when the ratio is 1:1.
\end{itemize}

In summary, {\sc PUDroid} is very effective in
identifying contaminants in highly contaminated
datasets.

\subsection{RQ 3: Unknown Contaminants} 

In the dataset, we have 178 families of malware, but
many families only have one malware sample. 
%as shown in
%Fig~\ref{fig:family_distribution}. As illustrated in
%the figure, 
We only found that only 5\% of the total families contains
more than 100 malware and about 50\% of families
contain 3 samples or fewer.  Such a distribution
pattern indicates that some families of malware are
less likely to be repackaged and redeployed as new
variations while others are popularly repackaged and
redistributed as new variations. 

While the focus of this particular investigation is on
\textsc{PUDroid}'s performance when the benign dataset is 
contaminated with unknown malware. we also want the contaminants
to be representative of the current practice of
creating variations of malicious behaviors to promote
rapid infections of devices. As such, our
contaminants are from families that have more than 300
samples in our malware dataset. In this case, the
families include \emph{FakeInstaller},
\emph{DroidKungFu}, \emph{Plankton}, \emph{Opfake},
\emph{GinMaster} and \emph{BaseBridge}. 
Next, we briefly describe the malicious behaviors of
these families. 
% More detailed reports and descriptions
% are available from F-Secure lab~\cite{fsecure16}.  

\begin{figure*}[tbh!]


\begin{center}
\mbox{
\begin{minipage}[t]{.33\textwidth}
\includegraphics[width=1.00\textwidth]{fm_acc.pdf}
\end{minipage}
\begin{minipage}[t]{.33\textwidth}
\includegraphics[width=1.0\textwidth]{fm_auc.pdf}
\end{minipage}
\begin{minipage}[t]{.33\textwidth}
\includegraphics[width=1.0\textwidth]{fm_fm.pdf}
\end{minipage}
}
\caption{Performance of PUDroid when faced with new malware contaminants}
\label{fig:results2}
% \vspace{-10pt}
\end{center}
\end{figure*}

\begin{figure*}[tbh!]
\begin{center}
\mbox{
\begin{minipage}[t]{.33\textwidth}
\includegraphics[width=1.00\textwidth]{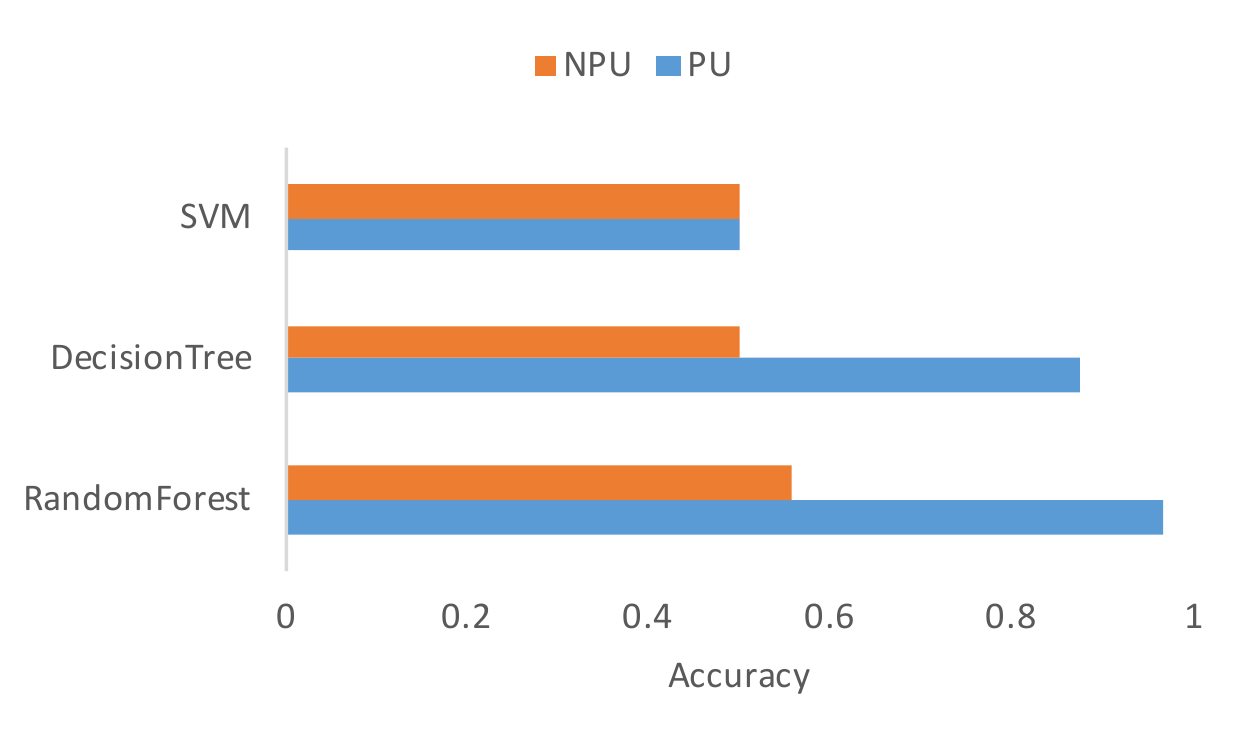}
\end{minipage}
\begin{minipage}[t]{.33\textwidth}
\includegraphics[width=1.0\textwidth]{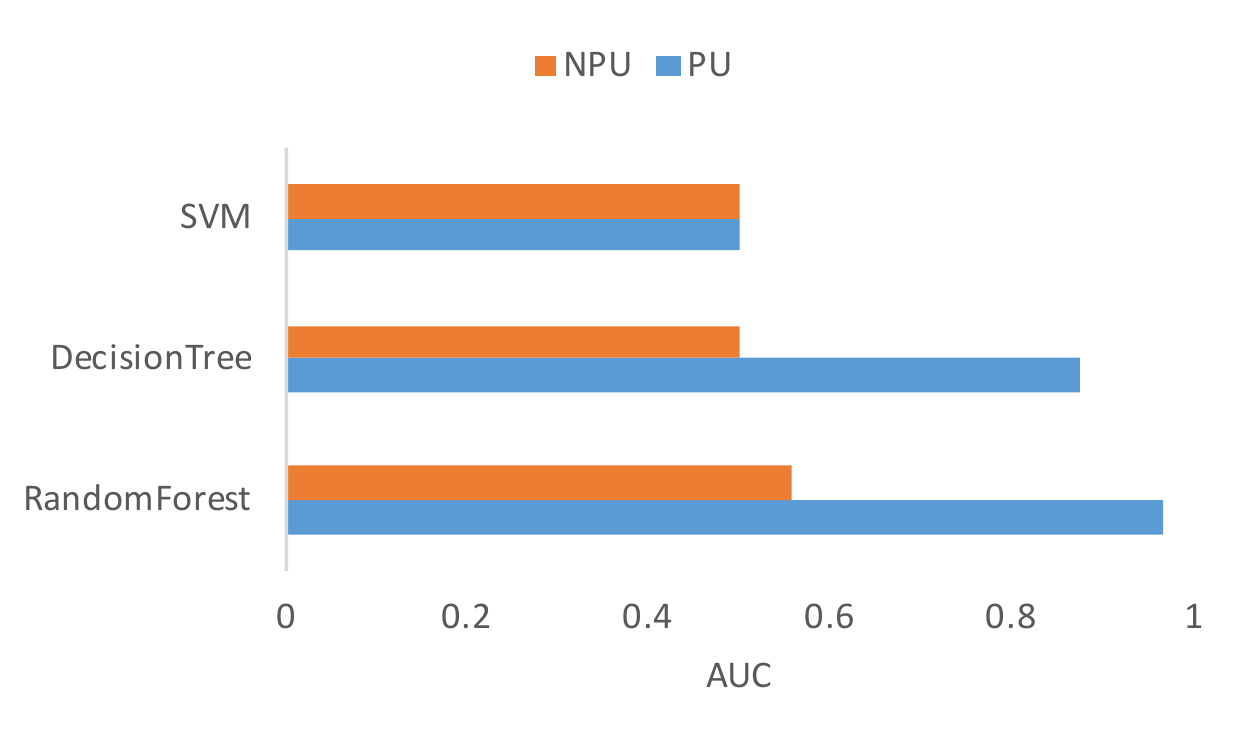}
\end{minipage}
\begin{minipage}[t]{.33\textwidth}
\includegraphics[width=1.0\textwidth]{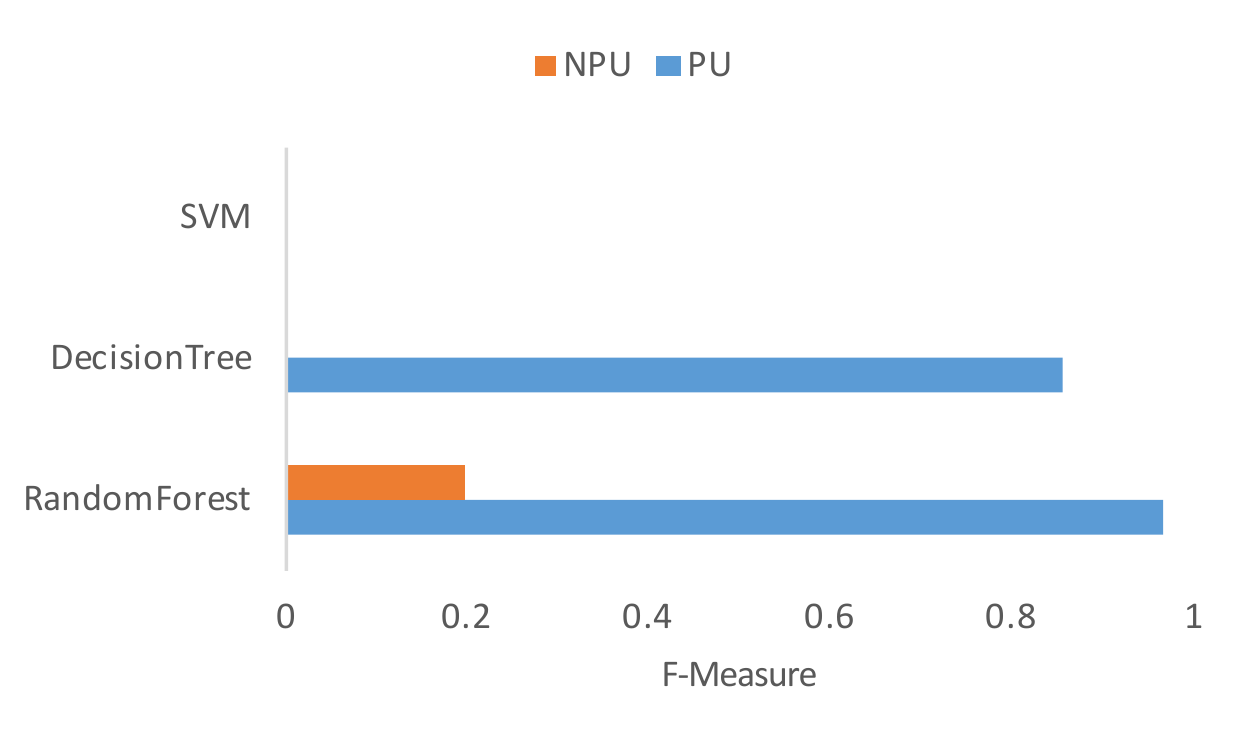}
\end{minipage}
}
\caption{Performance of PUDroid when faced with
contaminated data in malware dataset}
\label{fig:results3}
% \vspace{-10pt}
\end{center}
\end{figure*}

%\begin{figure}
%    \centering
%    \includegraphics[width=0.5\textwidth]{family_distribution.pdf}
%    \caption{Distribution of Families with the Number of Malware}
%    \label{fig:family_distribution}
%\end{figure}

\begin{itemize}
\vspace{2pt}  
\item \emph{FakeInstaller} is the malware family with the
largest number of variations in our dataset (925
malware samples).  Malware authors simply repackaged
commonly distributed apps (e.g., \emph{Facebook}) with
malicious functionalities. These malicious apps send
SMS messages to premium rate numbers, without the
user’s consent, passing itself off as the installer for
a legitimate application. As previously reported, over
60\% of Android malware samples processed by McAfee
belong to this family~\cite{Ruiz2012}.
\vspace{2pt}  
\item \emph{DroidKungFu} family exploits several known
but unpatched vulnerabilities in earlier Android
versions to gain root access to a device and steal
sensitive information. Stolen information is sent to
remote command and control (C\&C) servers.
\vspace{2pt}  
\item \emph{Plankton} family focuses on to collect
every information on the smartphone, such as
International Mobile Equipment Identity (IMEI) number,
user ID, and the browser history. It then posts
information via URL.  It can also use remote control to
modify browser's bookmarks and install downloaded files
on devices.
\vspace{2pt}  
\item \emph{BaseBridge} family sends fake update
notifications to users in hope of fooling them to
install malicious software components to allow
cyber-criminals to remotely control infected devices.

\end{itemize}

We included five out of six malware families in the
malicious dataset for training. To create a
contaminated dataset, we mixed samples from the
remaining malware family and benign apps using the
ratio of 1:1.  Again, 1/3 of our dataset was used for
testing and 2/3 was used for training.  We repeated this
process so that each of the six malware families was
used as contaminants.  We also adopted SVM classifier in this
experiment because our earlier study indicates that SVM
can work very well when the number of contaminants is
moderate. We report the results in
Fig.~\ref{fig:results2}. 

%Because repackaged malware is introduced at a prolific
%rate, our study investigates \textsc{PUDroid}'s ability
%to effectively detect newly introduced and have yet to
%be identified repackaged malicious apps based on
%previously known behaviors. This study would be
%representative of a scenario when a researcher
%downloads a malicious app that has not been uncovered.
%As such, the researcher mistakenly labels this app as
%benign. 

As shown, we can clearly see that {\sc PUDroid} shows much better performance
than simply applying SVM without PU learning when the
system face any new malware contaminants especially in
BaseBridge, GinMaster and DroidKungFu. In BaseBridge
and GinMaster families, {\sc PUDroid} achieves more
than 45\% accuracy. Moreover, without {\sc PUDroid}
achieves 0\% in F-measure in families with smallest
samples among the six families (Basebridge and
Ginmaster). In the case of FakeInstallers, SVM alone is
effective since this is a very popular malware with the
most common behavior. With {\sc PUDroid}, we can
achieve 10\% higher accuracy. In general, {\sc PUDroid}
is effective in detecting repackaged contaminants.

\subsection{RQ 4: Detecting Benign Contaminants
in Malicious Datasets}

%So far, we have investigated how malicious
%contaminants can affect the performance of
%classifier-based malware detection systems and how
%{\sc PUDroid} can help.
It is quite possible that benign apps could be also
mistakenly labeled as malicious and thus, are included
in the malicious datasets.  This is because to keep up
with new malware samples, the datasets need to be
updated periodically and this process can result in
mislabeling by researchers (e.g., the number of benign
apps is typically much larger than that of malicious
apps).  Furthermore, it is also possible for a malware
repository to be intentionally compromised by someone
(e.g., a cyber-attack to weaken training data).
Regardless of the causes, the presence of benign apps
in malicious datasets can make machine learning less
effective.

To evaluate this scenario, we again used 1/3 of our
dataset as the testing dataset. We set the ratio of
benign and malicious app as 8:1 to simulate the case of
highly contaminated datasets. In
Fig.~\ref{fig:results3}, we report the performances of
{\sc PUDroid} and using classifier alone. {\sc PUDroid}
with Random Forest achieves 96.36\% accuracy when a
malicious dataset is heavily contaminated with benign
apps. {\sc PUDroid} with Decision Tree is second.
However, the results show that Random Forest without PU
learning can still achieve 55.5\% accuracy, but
Decision Tree without PU learning achieves only about
50\%, which is the same as making a random guess. The
results also confirm prior observation that SVM does
not work well when the dataset is highly contaminated. 

%\commentws{need to explain the above observation better.}

\section{Related Work}\label{relatedwork}
% \vspace{-5pt}
In this work, we combine several concepts and
techniques to construct {\sc PUDroid}, a framework to
detect the malicious contaminants to increase accuracy
of any machine-learning based detector. The framework
performs feature selection and feature embedding
techniques to increase robustness and efficiency of our
proposed system.

%\commentws{The subsection on Android Security is already in the introduction so there is no need to repeate it here. I've commented it out completely.}
%\subsection{Machine Learning on Android Security}
Machine learning techniques have been used to build
several robust and effective malware detection
systems~\cite{arp2014drebin,sun2016sigpid}.
For example, {\sc SIGPID} \cite{sun2016sigpid} applies
67 machine learning algorithms to find which algorithm
is better at classification based on permission
features to detect the malware. Huang el
al. \cite{huang2013performance} explore the use of
machine learning of permission to detect malicious
applications. Their detector is based on four common
machine learning techniques. As these prior efforts
have shown, different machine learning algorithms
perform differently on different dataset, however, in
general, machine learning has been effective in
performing malware detection especially when datasets
are properly labeled.

%\subsection{Embedding Features in Different Area}

% In our paper, we use three different views and build a
% embedding feature vector for malware detection.
% Embedding features has been used in previous work to
% improve accuracy and performance of machine learning
% based
% systems.\cite{arp2014drebin,cao2014tensor,lu2017multilinear,cao2016multi}.
% There two common ways to use embedding features. First,
% {\sc DREBIN}~\cite{arp2014drebin} and our paper choose
% to use a big vector to congregate features from
% different views. Cao el al.\cite{cao2014tensor} use
% tensor(high-dimension space) to embed all features.
% Tensor embedding can improve the accuracy in some
% cases, but it will build a very sparse high-dimension
% space which spend lots of space and time than vector
% embedding.

Typically, more features can help to improve the
accuracy and detection rate, but also incur more time
and space. Many advanced techniques, i.e. tensor,
factorized machine \cite{he2014dusk, he2017kernelized, he2017multi},
improve the performance by leveraging multi-view datasets.
These techniques are frequently applied in many areas,
such as nature language \cite{zhang2016multi}, recommendation \cite{zheng2017joint}, bio-medical \cite{cao2017t}, images \cite{hao2013linear}, influence networks \cite{shao2015clustering}, behavioral detection \cite{sun2017sequential}.
% For example, {\sc
% DREBIN}~\cite{arp2014drebin} combines static analysis,
% and uses embedding features including nine different
% views. This approach results in a very sparse vector
% and most features do not contribute to malware
% detection. To combat this problem, we adopt a feature
% selection strategy to help identify useful and
% explainable patterns that can contribute to malware
% detection. 

Others have also used feature
selection in different
areas~\cite{sun2016sigpid,wei2017unsupervised,cao2014tensor,wei2017multi}.
%~\cite{sun2016sigpid,wei2016unsupervised,wei2016nonlinear}
By finding important or significant patterns, feature
selection can improve the overall performance of a
system. For example, {\sc SIGPID} \cite{sun2016sigpid}
proposes a multi-level mining technique to do feature
selection, which ends up using only 22 of 135
permissions to detect malware. 

% Wang et
% al. \cite{wang2014exploring} study dangerous permissions
% for malware detection. They use statistical analysis to
% study the correlation between each permission and
% malicious behavior and assign a weight to each permission. If a 
% permission is assigned a high weight, apps that use the
% permission may be dangerous. 

In this work, we apply PU learning to help detect
contaminants.  PU learning is one kind of semi-supervised learning
have been used to perform link prediction in the social
network~\cite{ZYZ14}, and images \cite{cui2017inverse}. 
{\sc MLI}~\cite{ZYZ14} uses the
PU learning to help them deal with unlabeled link
prediction in social networks. To use these unlabeled
link information, {\sc MLI} \cite{ZYZ14} use PU learning
to identify reliable negative instances from the
unlabeled set with the spy technique \cite{LDLLY03}.
Our work, on the other hand, proposes to model the
\textit{hidden malware probability} based on the
identified positive and unlabeled sets. We propose a
non-parametric learning framework to detect malicious
contaminants.

% Liu et al. \cite{LDLLY03} propose different settings to
% find the reliable negative instances in text
% classification. Zhao et al. propose to classify graphs
% with only positive and unlabeled examples in
% \cite{ZKY11}. Zhang et al. are the first to propose the
% concept of PU link prediction in \cite{ZYZ14,
% zhang2015integrated} and study the PU social link
% prediction in multiple networks simultaneously. 
% A
% comprehensive survey about existing PU learning works
% is available in \cite{Z06}.

\section{Conclusion}\label{conlusion}

In this paper, we introduce {\sc PUDroid}, a framework
to detect and remove contaminants from training
datasets used in machine learning based malware
detection systems. We experimented with using a corpus
of over 5,500 malware and mixed them with benign apps
to create various sizes of unlabeled datasets.  We then
evaluate the performance of \textsc{PUDroid} under four
realistic settings. First, we evaluate the
effectiveness of \textsc{PUDroid} using three commonly
used classification techniques (SVM, Decision Trees,
and Random Forest). Second, we evaluate
\textsc{PUDroid} using  highly contaminated benign
datasets. Third, it is used to detect benign datasets
contaminated with unknown malware.  Fourth, we also
consider a case in which benign apps are intentionally
or accidentally included in the malware datasets.  

We then compare the detection performances of a system
with \textsc{PUDroid} and one without when contaminated
datasets are used for training.  The results indicate
that {\sc PUDroid} is effective at detecting malicious
contaminants.  {\sc PUDroid} can improve malware
detection rate by 62.82\% over detectors that use only
classifiers and no PU learning.  We also observe that
the detection accuracy is improved by 45\% when a
dataset is contaminated. To improve performance of {\sc
PUDroid}, we also apply feature selection and embedding
features. For future work, we plan to explore other
techniques to further improve the feature selection and
embedding process.  

\section*{Acknowledgements}
This work is supported in part by NSF through grants IIS-1526499 and CNS-1626432, NSFC through grants 61503253, 61672357 and 61672313, NIH through grant R01-MH080636, and the Science Foundation of Shenzhen through grant JCYJ20160422144110140.

\balance
\bibliographystyle{IEEEtran}
\bibliography{IEEEfull,sigproc.bib}

% that's all folks
\end{document}